\newtheorem{theorem}{Theorem}
\newtheorem{definition}{Definition}
\newtheorem{conjecture}{Conjecture}
\newcommand{\Nant}{\boldsymbol{\uparrow}}
\newcommand{\Eant}{\boldsymbol{\rightarrow}}
\newcommand{\Want}{\boldsymbol{\leftarrow}}
\newcommand{\Sant}{\boldsymbol{\downarrow}}
\newcommand{\turnleft}{\mathrm{t^\circlearrowleft}}
\newcommand{\turnright}{\mathrm{t^\circlearrowright}}
\newcommand{\tendsto}{\underset{n\to\infty}{\longrightarrow}}
\newcommand{\vv}[1]{\overrightarrow{#1}}
\begin{document}
\title{Sideways on the highways}
\author[1,2]{Victor Lutfalla}
\affil[1]{Université Publique, France}
\affil[2]{Aix-Marseille Université, CNRS, I2M, Marseille, France}
\date{2025}
\maketitle

\begin{abstract}
  We present two generalised ants ($LLRRRL$ and $LLRLRLL$) which admit both highway behaviours and other kinds of emergent behaviours from initially finite configurations.
  This limits the well known Highway conjecture on Langton's ant as it shows that a generalised version of this conjecture generically does not hold on generalised ants.
\end{abstract}


\paragraph{Keywords.} Langton's ant, emergent behaviour, highway conjecture, turmites

\paragraph{Aknowledgements.} The author aknowledges partial support from ANR IZES and ANR ALARICE.

\section{Introduction}
The original of \emph{Langton's ant} model was introduced in the 80's by different researchers \cite{langton1986, dewdney1989} as a model for ``artificial life'' meaning a simple discrete dynamical system exhibiting some kind of \emph{emergent behaviour}.

\begin{figure}[h!]
  \center
  \includegraphics[width=0.5\textwidth]{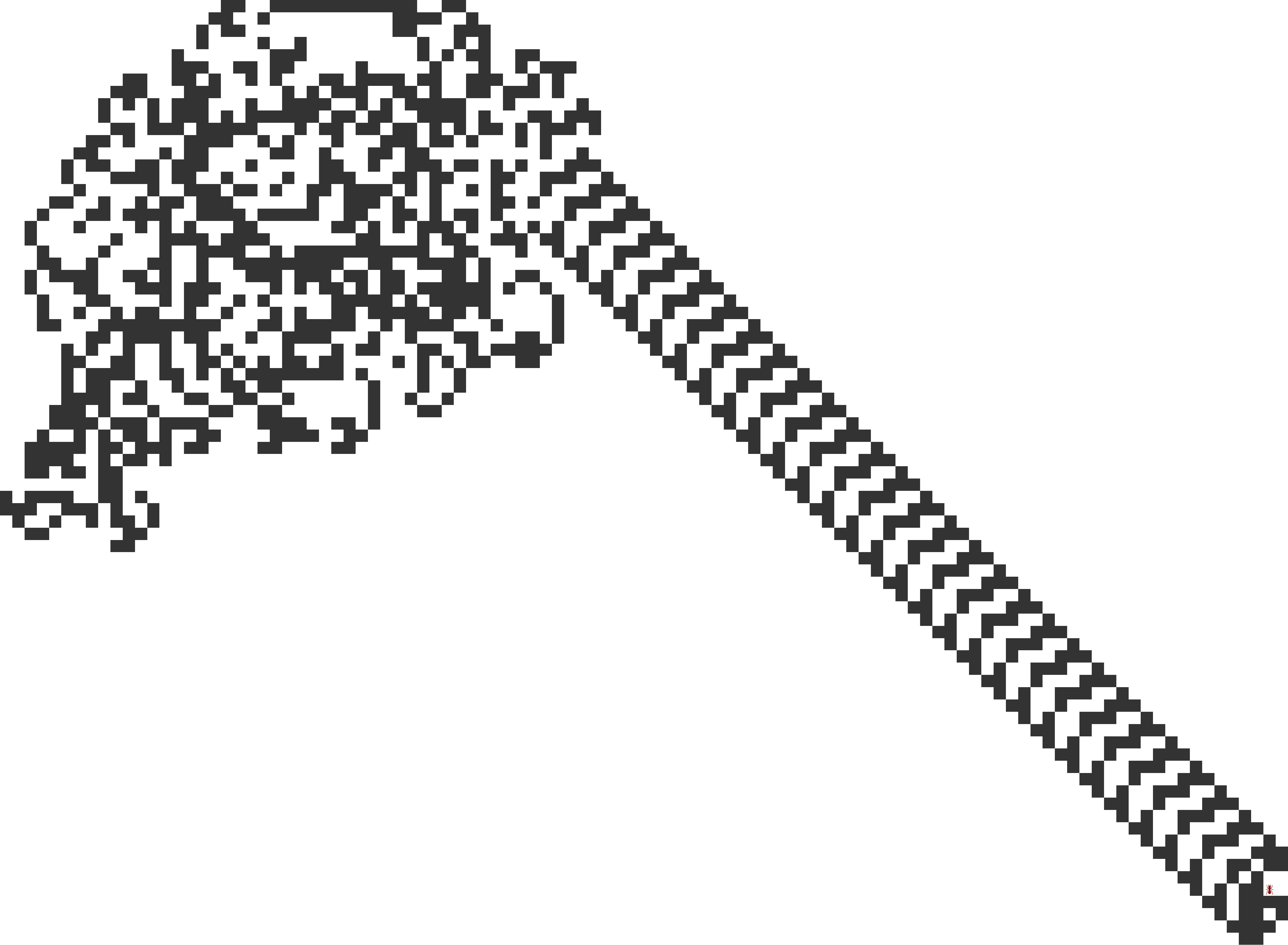}
  \caption{The configuration reached by Langton's ant from the uniform configuration after $13\,000$ steps. We can observe a periodic pattern leaving the initial seemingly chaotic pattern.}
  \label{fig:LR_highway}
\end{figure}

This emergent behaviour, called the \emph{highway}, is periodic with a drift.
From the initially uniform configuration, Langton's ant has, at first, a seemingly chaotic behaviour and after almost $10\, 000$ steps it enters the highway behaviour described above, see Fig.~\ref{fig:LR_highway}.

The remarkable observation, which dates from the first simulation in the 80s, is that every simulation of Langton's ant starting from a \emph{finite} configuration (only a finite number of cells are in a non-zero state) appears to have the same qualitative behaviour : an initial seemingly chaotic behaviour until the ant creates the suitable local pattern that starts the same highway of period 104 and shift $(\pm 2, \pm 2)$.

The \emph{highway conjecture} states that it is the case from every finite configuration.
Despite a lot of both theoretical and experimental efforts, little progress has been made towards proving or disproving this conjecture.

Several generalisations have been proposed for this model.
The generalisation that seems to better preserve the original ant's properties is the one called \emph{generalised ants}~\cite{gale1994,gale1998}.
In this class, emergent behaviours also appear and some of them also seem to be unavoidable (though no proof was found).

In the present paper we show that a relaxed generalisation of the highway conjecture does not hold over generalised ant.
In \cite{gajardo2025} we showed that there is no universal bound on the number of possible distinct highway behaviours for a given ant rule and even that there exist ant rules with infinitely many highways.
However, on the ants we discussed in that paper, the only asymptotic behaviour we prove (and observe) are diagonal highways \emph{i.e.} the behaviour is periodic with a drift which is a multiple of $(\pm 1, \pm 1)$. So one may argue that these are qualitatively similar behaviours.
In this paper we present two ants, which we found during a mass exploration of generalised ants, that admit qualitatively different asymptotic behaviours : the $LLRRRL$ ant and the $LLRLRLL$ ant.
We show that the $LLRRRL$ admits diagonal highways, non-diagonal highways and ``increasing-rectangle'' asymptotic behaviours (see Fig.~\ref{fig:pattern_rectangle}) from finite configurations; and that the $LLRLRLL$ admits an infinite family of highways and ``conic'' asymptotic behaviours (see Fig.~\ref{fig:cone}) from finite configurations.

In Section~\ref{sec:ants} we introduce the model and definitions.
In Section~\ref{sec:llrrrl} we present our results on the $LLRRRL$ ant. 
In Section~\ref{sec:llrlrll} we present our results on the $LLRLRLL$ ant. 

We provide a simulator to illustrate the behaviours we describe which is hosted on \href{https://lutfalla.fr/ant}{lutfalla.fr/ant} and whose source code is publicly accessible \cite{lutfalla2025}.

\section{Settings}
\label{sec:ants}
\subsection{Non-trivial ants}

Generalised ants are a generalisation of Langton's ant defined by a \emph{rule word} $w$ over alphabet $\{L,R\}$ moving over a configuration $c\in \{0,1,\dots |w|-1\}^{\mathbb{Z}^2}$.
When the ant arrives on a cell in state $k$, it turns clockwise if $w_k=R$ and counterclockwise otherwise, then increases the cell state by $1$ modulo $|w|$ and moves one step forward, see Fig.~\ref{fig:def_glant}.

\begin{figure}[htp]
  \center
  \includegraphics[width=\textwidth]{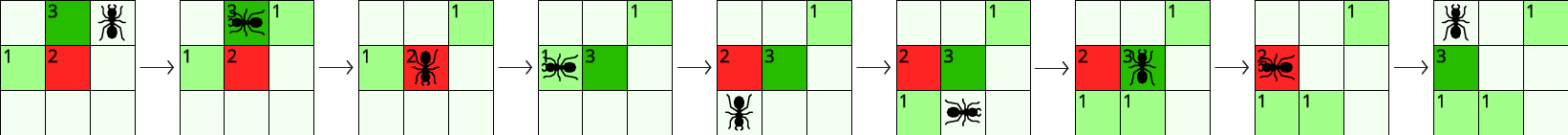}
  \caption{A simple example of the behaviour of the $LLRL$ ant. Left turning states are shown in shades of green, and the right turning state is in red. Non-zero states are additionally labelled with their state number.}
  
  \label{fig:def_glant}
\end{figure}

We now give a formal definition. \pagebreak

\begin{definition}[Internal states]
  We define the set $Q$ of directions (or internal state) $Q=\{\Nant,\Eant,\Sant,\Want\}\equiv \{(0,1), (1,0), (0,-1), (-1,0)\}$.
  We define the positive turn function $\turnleft$ on $Q$ also called counterclockwise permutation, and its converse the negative turn function $\turnright$ or clockwise permutation.
  That is \[\turnleft(\Nant)\ =\ \Want\ =\ \turnright(\Sant),\  \dots\]
\end{definition}

For simplicity we write $(i,j)+q$ with $q \in Q$ in that case we consider $Q$ through the natural isomorphism to $\{(0,1), (1,0), (0,-1), (-1,0)\}$.

\begin{definition}[Generalised ant of rule word $w$]
  Given a finite word $w\in \{L,R\}^*$.
  We call ant configuration a tuple $(c, (i,j), q)$ with $c\in \{0,1,\dots |w|-1\}^{\mathbb{Z}^2}$ called the grid configuration, $(i,j)\in \mathbb{Z}^2$ called the ant position, and $q\in Q$ the ant direction (or internal state).
  We define the generalised ant of rule word $w$ as a transition function $T_w$ over ant configurations defined as $T_w(c,(i,j),q)=(c',(i',j'),q')$ with :
  \begin{itemize}
  \item denoting $k=c_{i,j}$, if $w_k=R$ then $q':= \turnright(q)$, otherwise $q':=\turnleft(q)$
  \item $(i',j'):=(i,j)+q'$
  \item $c'_{i,j}:= c_{i,j} + 1 \mod |w|$
  \item $\forall (k,l)\neq(i,j), c'_{k,l}:=c_{k,l}$
  \end{itemize} 
\end{definition}

We call \emph{trace} of ant $w$ from configuration $C=(c,(i,j),q)$ the sequence of symbols (grid cell states) that the ant encounters when evolving from $C$, and \emph{trajectory} of $w$ from $C$ the sequence of coordinates of the ant when evolving from $C$.
The trace, which is an infinite word on alphabet $\{0,1,\dots |w|-1\}$, contains all the information about the behaviour of the ant and the initial configuration (or at least on the portion of $\mathbb{Z}^2$ that was visited by the ant).

Very few formal result are known on the model of generalised ant, for a quick overview of those results see \cite{gajardo2025}.

We now define highways and other emergent behaviours.


\subsection{Emergent behaviour}

For simplicity and conciseness, we define emergent behaviours only through the trace and trajectory of ants from finite configurations.

The easiest emergent behaviour to describe are highways, that is, periodic behaviours with a drift.

\begin{definition}
We say that ant $w$ is in a \emph{highway} of period $n$ and drift $(a,b)$ from configuration $C=(c,(i,j),q)$ if :
\begin{itemize}
\item $c$ is finite, that is : $|\{(i,j)\in \mathbb{Z}^2, c_{i,j} \neq 0\}|< +\infty$
\item the trace $x\in \{0,\dots |w|-1\}^\mathbb{N}$ of ant $w$ from $C$ is $n$-periodic\footnote{We also assume that the trace $x$ is not $k$-periodic for $k<n$.}, that is : for any $i$ we have $x_{i+n} = x_i$
\item the trajectory $\mathbf{y} \in (\mathbb{Z}^2)^\mathbb{N}$ of ant $w$ from $C$ is $n$-periodic modulo an $(a,b)$-drift, that is : for any $i$ we have $\mathbf{y}_{i+n} = \mathbf{y}_i + (a,b)$
\end{itemize}
\end{definition}



By analogy, we say that the ant $w$ starts a highway of period $n$ and drift $(a,b)$ from pattern $P$ if it start such a highway from the finite configuration obtained by perturbating the $0$-uniform configuration by pattern $P$ at the origin.

We consider that two highways for ant $w$ are equivalent when, their trace and trajectory are identical up to removing a finite prefix from both.
And we consider highways up to this equivalence relation.

We also say that ant $w$ eventually enters or reaches a highway from configuration $C$ if there exists an integer $N$ such that $w$ starts a highway at configuration $T_w^N(C)$.

We now introduce more general notions on traces and on trajectory in order to define our two next emergent behaviour.
\begin{definition}[Interlaced word]
  An infinite word $x\in \mathcal{A}^\mathbb{N}$ on finite alphabet $\mathcal{A}$ is called \emph{interlaced} when there exists an integer $k$, and words $u_i$, $v_i\in \mathcal{A}^*$ for $0\leq i < k$ such that $x=\bigodot\limits_{n \in \mathbb{N}} t_n = t_0\cdot t_1 \cdot t_2 \cdot \dots t_n \dots $ where $t_n= \bigodot\limits_{0\leq i < k} u_i\cdot\left(v_i^n\right) = u_0(v_0^n)u_1(v_1^n)\dots u_{k-1}(v_{k-1}^n)$.
\end{definition}


Examples of interlaced words include periodic words (case $k=1$, $v_0$ is the empty word), and words of the form $uvuvvuvvvuvvvuv^4uv^5\dots$ (case $k=1$) the simplest non periodic such word is $01011011101111…$ ($u=01$ and $v=1$).


\begin{definition}[Previsible trajectory]
  An infinite trajectory $\mathbf{y} \in (\mathbb{Z}^2)^{\mathbb{N}}$ is \emph{$\Delta$-previsible} (or simply \emph{previsible}) for a compact $\Delta$ containing the origin $0$ when there exists two non-decreasing functions $f,g : \mathbb{N} \to \mathbb{R}^+$, an integer $k$ and a non-zero vector $\vv{d} \in \mathbb{R}^2$ such that:
  \begin{itemize}
  \item $\partial \Delta$ contains $\mathbf{y}$  up to $f$, $g$, $\vv{d}$ and  $k$, that is :
    \[ \forall n \in \mathbb{N},\  \mathbf{y}_n \in \partial_k(f(n)\Delta + g(n)\vv{d})\]
    \item $\mathbf{y}$ exhausts $\partial \Delta$ up to $f$, $g$, $\vv{d}$ and $k$, that is :
      \[ \forall p \in \partial \Delta,\ \forall N \in \mathbb{N},\ \exists n\geq N,\ d\left(\mathbf{y}_n, f(n)p+ g(n)\vv{d}\right) \leq k \]
  \end{itemize}
  Where $f(n)X+g(n)\vv{d}$ is the translation by $g(n)\vv{d}$ of the expansion of $X$ by factor $f(n)$ centred on $0$, $\partial X$ is the boundary of $X$ and $\partial_k X$ is the set of points that are at distance at most $k$ from $\partial X$.

When $f(n)\tendsto +\infty$ we say that the behaviour is \emph{increasing} (and \emph{non-increasing} otherwise), and when $g(n) \tendsto +\infty$ we say that the behaviour is \emph{drifting} (and \emph{non-drifting} otherwise).
\end{definition}

With the additional assumption that functions $f$ and $g$ are easily computable (usually quadratic or quasi-linear functions), a $\Delta$-previsible trajectory becomes easily previsible in terms of computation in the sense of \cite{gajardo2017}, which means that the prediction problem (predicting the state of a given cell after a given number of steps) is computationally simple.

\begin{definition}[Emergent behaviour]
  We say that ant $w$ has an \emph{emergent behaviour} from configuration $C$ if its trace is eventually interlaced and its trajectory is $\Delta$-previsible for some compact $\Delta$.
\end{definition}

Highways are the case of drifting non-increasing emergent behaviour, and the classical result on unboundedness of ant trajectories \cite{bunimovich1992} implies that the case of non-drifting non-increasing emergent behaviour does not exist.

\begin{definition}[Increasing rectangle behaviour]
  We say that ant $w$ has an \emph{increasing rectangle behaviour} from configuration $C$ if it has an increasing non-drifting emergent behaviour where the compact $\Delta$ is a rectangle.
\end{definition}

\begin{definition}[Cone behaviour]
  We say that ant $w$ has a \emph{cone behaviour} from configuration $C$ if it has a drifting and increasing emergent behaviour with the drifting function $g$ and increasing function $f$ of the same order, that is $g=\Theta(f)$.
\end{definition}


Note that our definition of emergent behaviour is quite restrictive and does not include some known behaviours that are considered emergent, such as the increasing square built by ant $LRRRRLLLRRRL$ from the $0$-uniform configuration (which has a relatively simple trace word, but not precisely interlaced).

To build simple emergent behaviour we use \emph{widgets}, that is : small patterns that have a simple ant dynamics and which when combined with other widgets yield complex (but easily decomposable) dynamics.

\begin{figure}[htp]
    \center \includegraphics[width=0.48\textwidth]{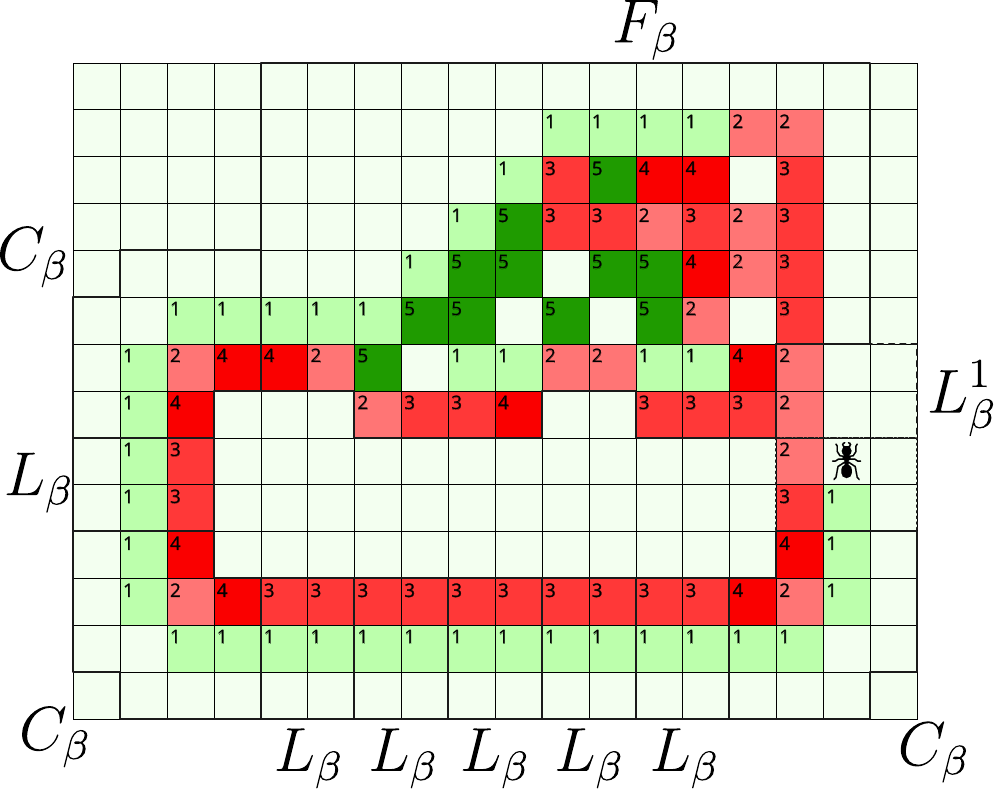}
    \hfill
    \includegraphics[width=0.48\textwidth]{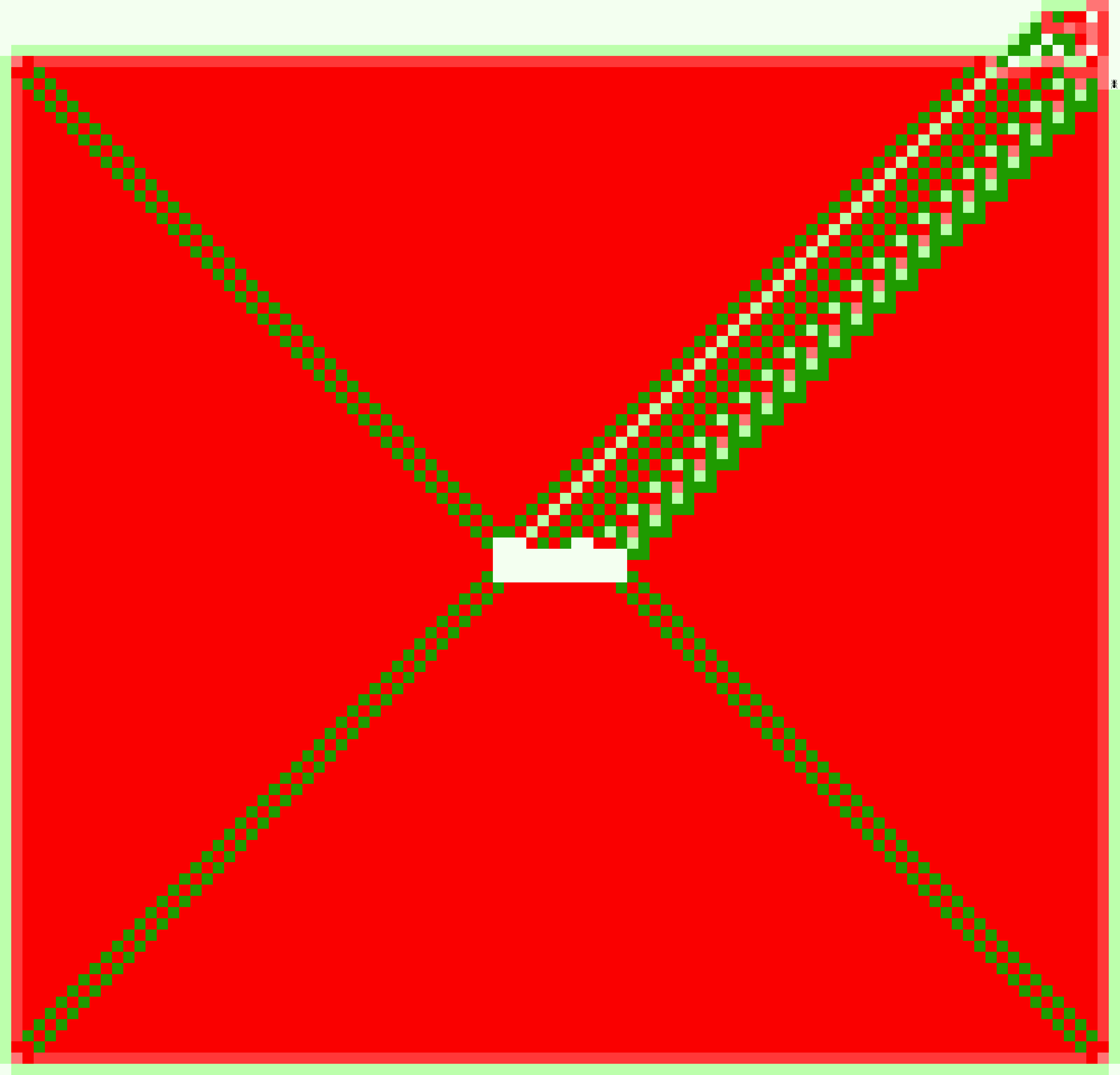}
    \caption{The pattern $P_\beta$ (left) on which the ant $LLRRRL$ starts an increasing rectangle behaviour, and the configuration reached from $P_\beta$ (right, scaled down) after $39480$ steps ($20$ full revolutions).}
    \label{fig:pattern_rectangle}
\end{figure}
\pagebreak

\section{Emergent behaviours of the $LLRRRL$ ant }
\label{sec:llrrrl}
In our previous work we presented a family of ants that has a growing set of highways and even a single ant that has an infinite set of highways \cite{gajardo2025}.
However, as many people remarked, all the highways that we discussed in that work are diagonal, that is their drift is a multiple of $(\pm 1, \pm 1)$ so one may argue that they are quite similar.

In this section we present the ant $LLRRRL$ that has multiple qualitatively different emergent behaviours.

\begin{theorem}
  The ant $LLRRRL$ has (from finite configurations) :
  \begin{enumerate}
  \item non-diagonal highway behaviours
  \item increasing rectangle behaviours
  \item an infinite family of diagonal highway behaviours
  \end{enumerate}
\end{theorem}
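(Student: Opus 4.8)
The plan is to exhibit, for each of the three items, an explicit finite seed pattern and to track the ant's trajectory and trace by decomposing the dynamics into a small number of \emph{widgets} as announced before the theorem. The overall strategy is identical in the three cases: (i) identify a finite pattern $P$; (ii) partition the subsequent evolution into maximal segments during which the ant traverses a region that is either all-$0$ (``free flight'' widget) or a previously-laid pattern (``transport'' or ``turning'' widget); (iii) show that after finitely many steps the sequence of widgets visited becomes periodic (for the highway items) or eventually-periodic-with-a-slowly-growing-parameter (for the rectangle item); (iv) read off the trace and trajectory from the widget sequence and check they satisfy the relevant definition (highway, or increasing non-drifting emergent behaviour with $\Delta$ a rectangle).

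For item (1), the non-diagonal highway, I would first do a computer search / hand simulation from the $0$-uniform configuration (or from a small seed) to locate a finite prefix after which the trace becomes $n$-periodic and the trajectory is $n$-periodic modulo a drift $(a,b)$ with $(a,b)$ \emph{not} a multiple of $(\pm1,\pm1)$ — e.g. a drift of the form $(\pm k, 0)$ or $(\pm 2, \pm 1)$. Concretely I would display the period-$n$ loop of cell states the ant cycles through, verify by direct check that applying $T_{LLRRRL}^{\,n}$ to the configuration at the start of one period reproduces the same local pattern translated by $(a,b)$, and invoke the definition of highway. This reduces to a finite verification: exhibit the pattern, exhibit the length-$n$ trace word, and confirm the one-period shift identity; periodicity of the whole orbit then follows by induction since the ant only ever sees the translated copy of the same finite pattern.

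For item (3), the infinite family of diagonal highways, the natural approach is the one from \cite{gajardo2025}: build a ``highway core'' widget that, when fed into itself, reproduces a translated copy (giving one diagonal highway), and then insert between consecutive cores a ``spacer'' widget that can be repeated an arbitrary number $m \geq 0$ of times without disturbing the reassembly — each choice of the repetition count (or of a finite ``decoration'' of the starting pattern) yielding an inequivalent highway under the trace/trajectory equivalence. I would (a) define the core and spacer patterns for $LLRRRL$, (b) prove the composition lemma (core$\circ$spacer$^m\circ$core reproduces the pattern shifted by a fixed diagonal vector), (c) check that different $m$ give non-equivalent highways by comparing, say, the periods or the minimal periods of the traces, which are distinct functions of $m$. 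For item (2), the increasing rectangle, I would start from the pattern $P_\beta$ of Fig.~\ref{fig:pattern_rectangle} and show the trajectory consists of four straight ``edge-traversal'' widgets joined by four ``corner'' widgets, that one full revolution returns to a configuration equal to the previous one with each of the four edges lengthened by a fixed constant, and hence the trajectory satisfies the $\Delta$-previsibility definition with $\Delta$ the unit rectangle, $f(n)$ linear in the revolution count, $g \equiv 0$ (non-drifting), $k$ bounded by the corner-widget size; the trace is interlaced with $k=4$, each $u_i$ the corner word and each $v_i$ the edge-step word.

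The main obstacle I expect is item (2): proving that one revolution \emph{exactly} reproduces the configuration with all four edges extended by a constant, i.e. that the corner widgets always re-attach correctly no matter how long the edges have grown. This is an induction over revolutions whose inductive hypothesis must pin down the configuration in a neighbourhood of each of the four corners precisely enough that the corner-widget lemma applies verbatim at every scale; setting up that invariant (and checking the finitely many corner cases and the translation-invariance of the edge widget) is the delicate part, whereas items (1) and (3) are, respectively, a finite check and a routine adaptation of \cite{gajardo2025}.
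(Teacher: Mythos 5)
Your plan matches the paper's (sketched) proof essentially step for step: a direct finite verification of a one-period shift identity for the non-diagonal highway from an explicit seed $P_\alpha$ (period $300$, drift $(-2,0)$), a bouncer--$k$ links--main widget chain $P_\gamma(k)$ giving diagonal highways of period $800+96k$ for item (3), and a corner/link/flip widget decomposition of $P_\beta$ with an induction over revolutions for the rectangle. The only detail you would discover in carrying it out is that, because the corner, link and flip widgets come in several variants (internal states), the configuration only reproduces itself (with lengthened edges) after \emph{four} trips around the rectangle rather than one, so your inductive invariant must be stated over such ``full revolutions'' ($600+128k$ steps for the $k$th, drift $(2,2)$, two new links per side).
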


\begin{proof}[Sketch]
  As is usual with ants, the proof of item 1 is surprisingly easy once you find the suitable seed pattern as one only has to compute the dynamics of the $LLRRRL$ ant from the prescribed pattern for the given number of steps and check that it has reached the same local pattern up to a drift.
  Items 2 and 3 require the use of widgets. We only describe the widgets and their dynamics in this sketched proof.

  Starting on the pattern $P_\alpha$ of Fig.~\ref{fig:pattern_horizontal}, the ant $LLRRRL$ starts a horizontal highway\footnote{\href{https://lutfalla.fr/ant/highway.html?antword=LLRRRL&t=horizontal}{\texttt{lutfalla.fr/ant/highway.html?antword=LLRRRL\&t=horizontal}}} of period $300$ and drift $(-2,0)$.

   \begin{figure}[htp]
    \center \includegraphics[width=0.5\textwidth]{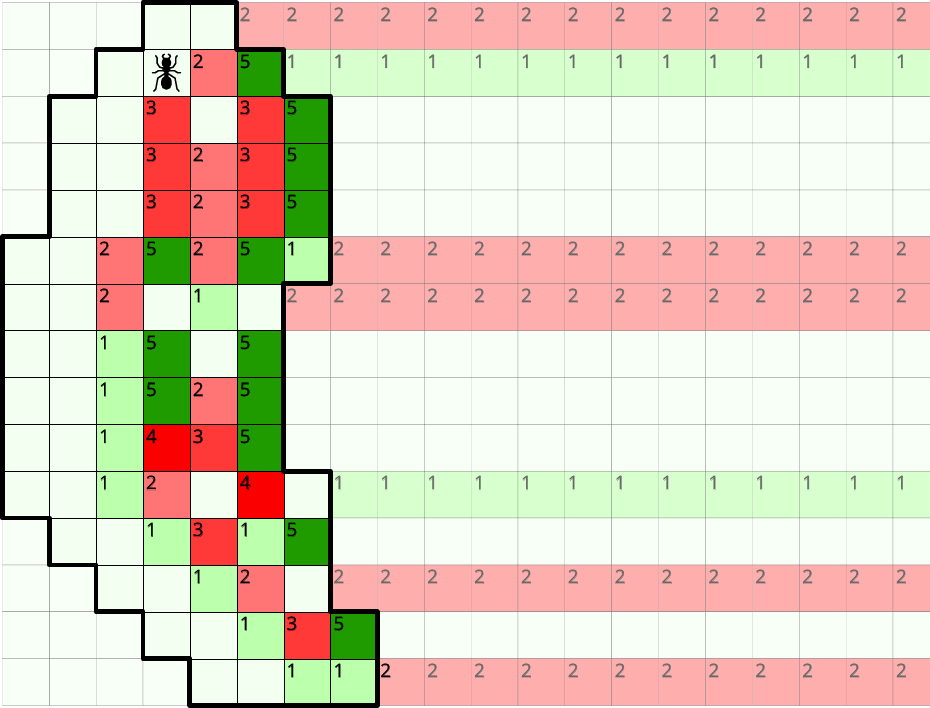}
    \caption{The pattern $P_\alpha$ (with bold boundary) on which the ant $LLRRRL$ starts a highway of period $300$ and drift $(-2,0)$. To the left of $P_\alpha$ we see the \emph{residue} left behind by the highway.}
    \label{fig:pattern_horizontal}
    \end{figure}

  Starting on the pattern $P_\beta$ of Fig.~\ref{fig:pattern_rectangle}, the ant $LLRRRL$ starts an ``increasing rectangle'' behaviour\footnote{\href{https://lutfalla.fr/ant/highway.html?antword=LLRRRL&t=rectangle}{\texttt{lutfalla.fr/ant/highway.html?antword=LLRRRL\&t=rectangle}}}.

  The pattern $P_\beta$ can be decomposed in a Flip (or commutator) widget $F_\beta$, three corner widgets $C_\beta$ and link widgets $L_\beta$.
  
  The corner widget $C_\beta$ and link widget $L_\beta$ both appear in two variants $C_\beta^0$, $C_\beta^1$ and $L_\beta^0$, $L_\beta^1$.
  The flip widget $F_\beta$ appears in four variants $F_\beta^0$, $F_\beta^1$, $F_\beta^2$ and $F_\beta^3$.

  So the ant goes around the rectangle four times before it reaches a configuration that is identical up to having more links on all four sides.
  We call those four go-around a \emph{full-revolution}.
  Starting from configuration $P_\beta$, the $k$th full-revolution takes $600+128k$ steps, and at each full-revolution the ant moves by $(2,2)$, and each side of the rectangle has 2 new $L_\beta$ widgets.

  Starting on the pattern $P_\gamma(k)$ of Fig.~\ref{fig:pattern_diagonal} composed of a main widget $M_\gamma$, $k$ linking widgets $L_\gamma$ and a bouncer widget $B_\gamma$, the ant $LLRRRL$ starts a diagonal highway\footnote{\href{https://lutfalla.fr/ant/highway.html?antword=LLRRRL&t=diagonal&n=3}{\texttt{lutfalla.fr/ant/highway.html?antword=LLRRRL\&t=diagonal\&n=3}}} of period $800 + 96k$ and drift $(4,4)$.

  \begin{figure}[htp]
    \center
    \includegraphics[width=0.4\textwidth]{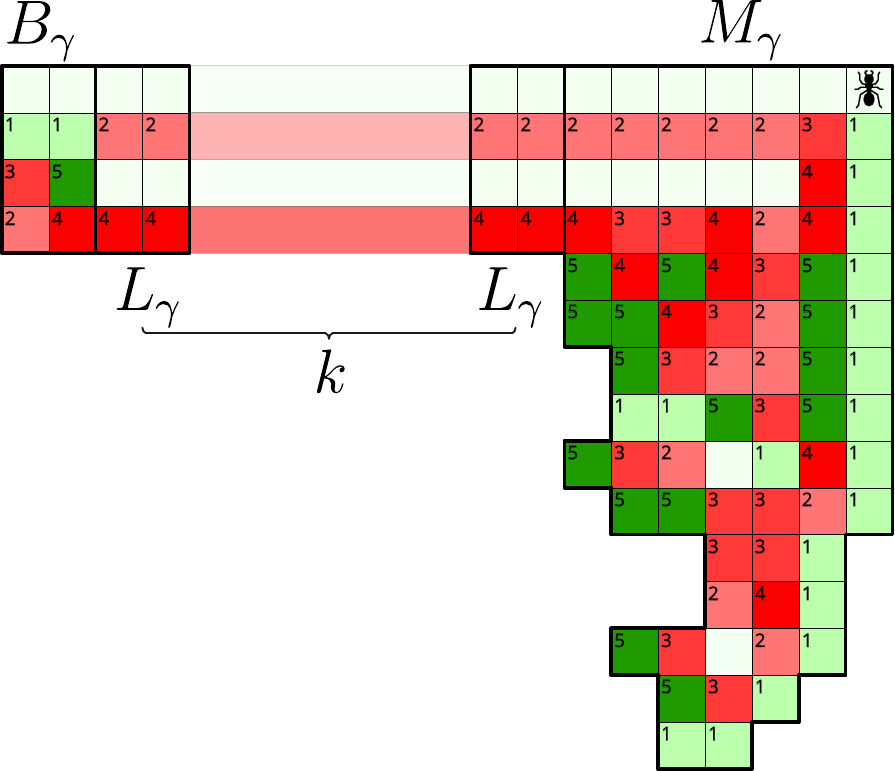}
    \caption{The pattern $P_\gamma(k)$ consisting of a $B_\gamma$ widget, $k$ aligned $L_\gamma$ widgets and a $M_\gamma$ widget on which the ant $LLRRRL$ starts a highway of period $800+96k$ and drift $(2,2)$.}
    \label{fig:pattern_diagonal}
  \end{figure}
  The pattern $P_\gamma(k)$ is quite complex as the bouncer widget $B\gamma$ has only $2$ variants $B_\gamma^0$ and $B_\gamma^1$, whereas the link widget $L_\gamma$ has 6 variants $L_\gamma^0$ to $L_\gamma^6$, and the main widget $M_\gamma$ has $6$ variants $M_\gamma^0$ to $M_\gamma^6$, and most times the ant only traverses a small portion of the $M_\gamma$ widget.
  So the ant bounces 6 times between the $B_\gamma$ and $M_\gamma$ widgets in a full highway period.

\end{proof}

\section{Emergent behaviours of the $LLRLRLL$ ant}
\label{sec:llrlrll}
In the previous section we discussed ant $LLRRRL$ which, in particular, has a non-highway emergent behaviour : a regular increasing-rectangle behaviour.

In this section we present ant $LLRLRLL$ which has both an infinity of highway behaviours, and regular cone behaviours.

\begin{theorem}
  The $LLRLRLL$ ant admits cone behaviours and an infinite family of diagonal highways.
\end{theorem}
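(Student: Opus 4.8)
The plan is to follow exactly the widget-based methodology that worked for the $LLRRRL$ ant in the previous section, now applied to $LLRLRLL$. For the infinite family of diagonal highways, I would first exhibit a parametrised seed pattern $P_\delta(k)$ built from a small number of widgets — presumably a main/head widget, a repeatable linking widget $L_\delta$ appearing in finitely many variants (one per state class), and a bouncer/reflector widget — such that the ant, when released on $P_\delta(k)$, traces out a periodic-with-drift trajectory whose drift is a multiple of $(\pm1,\pm1)$. The verification that each $P_\delta(k)$ genuinely starts a highway reduces, just as for item~1 of the $LLRRRL$ theorem, to a finite computation: run the ant on the pattern for the claimed period $n(k)$ (which I expect to be affine in $k$, of the form $a + bk$), and check that the resulting ant configuration equals the original one translated by the drift vector and with one more copy of $L_\delta$ inserted on the relevant side. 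The infinitude of the family then follows because the widgets compose cleanly: inserting an extra $L_\delta$ widget only lengthens the period by a fixed amount and leaves the emergent structure intact.

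For the cone behaviour, the key is to realise the definition from the excerpt: I need the trace to be eventually \emph{interlaced} and the trajectory to be $\Delta$-previsible with drifting function $g$ and increasing function $f$ of the same order, $g=\Theta(f)$. Concretely I would take $\Delta$ to be a (possibly degenerate) triangle or sector and exhibit a seed pattern on which the ant goes around the growing boundary of a cone: on each pass it lengthens two of the sides (the ``rays'' of the cone) by a fixed amount while the apex translates by a fixed vector, so that after $n$ passes both the size factor $f(n)$ and the offset $g(n)$ grow linearly in $n$ — hence $f(n)=\Theta(n)=\Theta(g(n))$. The widget inventory here will again be a flip/commutator widget controlling the turn-around at the apex, corner widgets at the two far ends of the growing rays, and linking widgets along the rays; the trace over one full pass decomposes as $u_0 v_0^n u_1 v_1^n \dots$ in the sense of the interlaced-word definition, with the $v_i^n$ factors produced by traversing the $n$ accumulated link widgets on each ray. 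Bunimovich–Kong-type unboundedness is not needed here since we construct the trajectory explicitly.

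The main obstacle I anticipate is the bookkeeping of widget variants and their interfaces. As the $LLRRRL$ diagonal case already shows (6 variants of $L_\gamma$, 6 of $M_\gamma$, the ant only partially traversing $M_\gamma$ most of the time), a $7$-state ant like $LLRLRLL$ will have even richer local state-dependence, so I expect each widget to carry up to $7$ variants indexed by the residual state left behind, and I must check that every time the ant exits one widget it enters the next in a compatible $(\text{state}, \text{direction}, \text{position})$ configuration — and, crucially, that traversing a widget advances its variant index cyclically so that after the right number of revolutions everything returns to the start. A secondary subtlety is arguing the cone is genuinely \emph{drifting and increasing of the same order} rather than, say, an increasing non-drifting rectangle or a mere highway: this requires showing that precisely the right subset of sides grows (the two rays, not a closed loop) so that $f$ and $g$ are both $\Theta(n)$ with neither dominating. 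I would handle all of this the way the paper does — by describing the widgets and their one-step-per-traversal dynamics in detail and delegating the exhaustive per-widget transition checks to the accompanying simulator, with the formal statement reduced to a finite but large case analysis.
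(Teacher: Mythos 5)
Your proposal follows essentially the same route as the paper: both results are obtained by exhibiting widget-decomposed seed patterns (a main widget, a bouncer, and a linearly accumulating family of link widgets for the highways $P_\zeta(k)$ of period $208+24k$; a main widget and a turn-and-bounce widget accreting two link widgets per full repetition for the cone $P_\theta$), with the verification reduced to finite computation of the ant dynamics and the $f=\Theta(g)=\Theta(n)$ growth coming from the linear accumulation of links. The only divergence is cosmetic: you guess the cone is traced by circuiting a growing triangular boundary with corner widgets, whereas the paper's construction has the ant bouncing back and forth ($6$ round trips per repetition of length $780+192k$, drift $(-6,-2)$) along a single growing chain of link widgets --- but this does not change the method or the conclusion.
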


\begin{proof}[Sketch]
  This result is proved using widgets.
  We provide in this sketch of proof the widget and a short description of their dynamics. 

  Starting on the pattern $P_\zeta(k)$ shown in Fig.~\ref{fig:llrlrll_diagonal} the ant $LLRLRLL$ starts a diagonal highway\footnote{\href{https://lutfalla.fr/ant/highway.html?antword=LLRLRLL&t=highway&n=3}{\texttt{lutfalla.fr/ant/highway.html?antword=LLRLRLL\&t=highway\&n=3}}} of period $208+24k$.
  The pattern $P_\zeta$ is composed of a main widget $M_\zeta$, $k$ link widgets $L_\zeta$ and a bouncer widget $B_\zeta$. The main widget $M_\zeta$ and bouncer widget $B_\zeta$ have two versions, whereas the link widget has four. Hence a full highway period corresponds to 2 back-and-forth.

  \begin{figure}[htp]
    
    \center
    \includegraphics[width=0.4\textwidth]{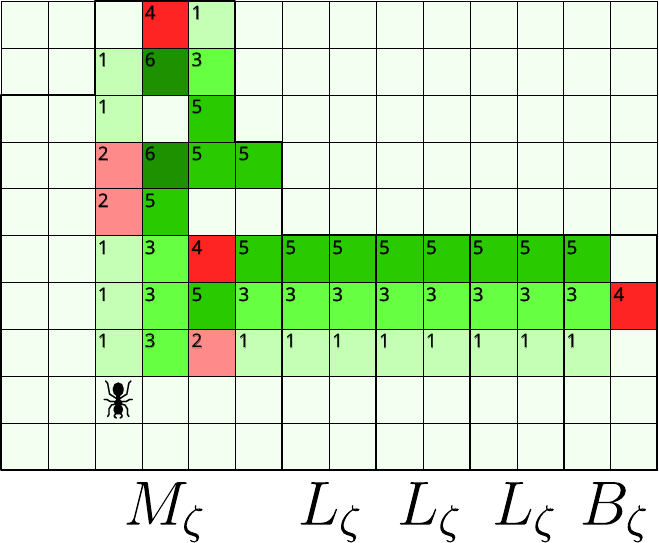}
    \caption{The pattern $P_\zeta(3)$ starting a diagonal highway of period $280$ for ant $LLRLRLL$.} 
    \label{fig:llrlrll_diagonal}
\end{figure}

  Starting on the pattern $P_\theta$ of Fig.~\ref{fig:cone}, the ant $LLRLRLL$ starts a cone behaviour\footnote{\href{https://lutfalla.fr/ant/highway.html?antword=LLRLRLL&t=cone}{\texttt{lutfalla.fr/ant/highway.html?antword=LLRLRLL\&t=cone}}}.
The pattern $P_\theta$ is composed of a main widget $M_\theta$ and a turn-and-bounce widget $B_\theta$.
The main widget has $3$ versions and the turn-and-bounce widget has 2. Hence a full repetition corresponds to 6 back-and-forth of the ant.
The first full repetition takes $780$ steps, at the end of which the configuration now contains the main widget $M_\theta$, two link widgets $L_\theta$ and the turn-and-bonce widget $B_\theta$ (Fig.~\ref{fig:cone} centre).
The $k$-th full repetition takes $780 + 192k$ steps, adds two link widgets and moves the ant by $(-6,-2)$.

\begin{figure}[htp]
  \center
  \includegraphics[width=\textwidth]{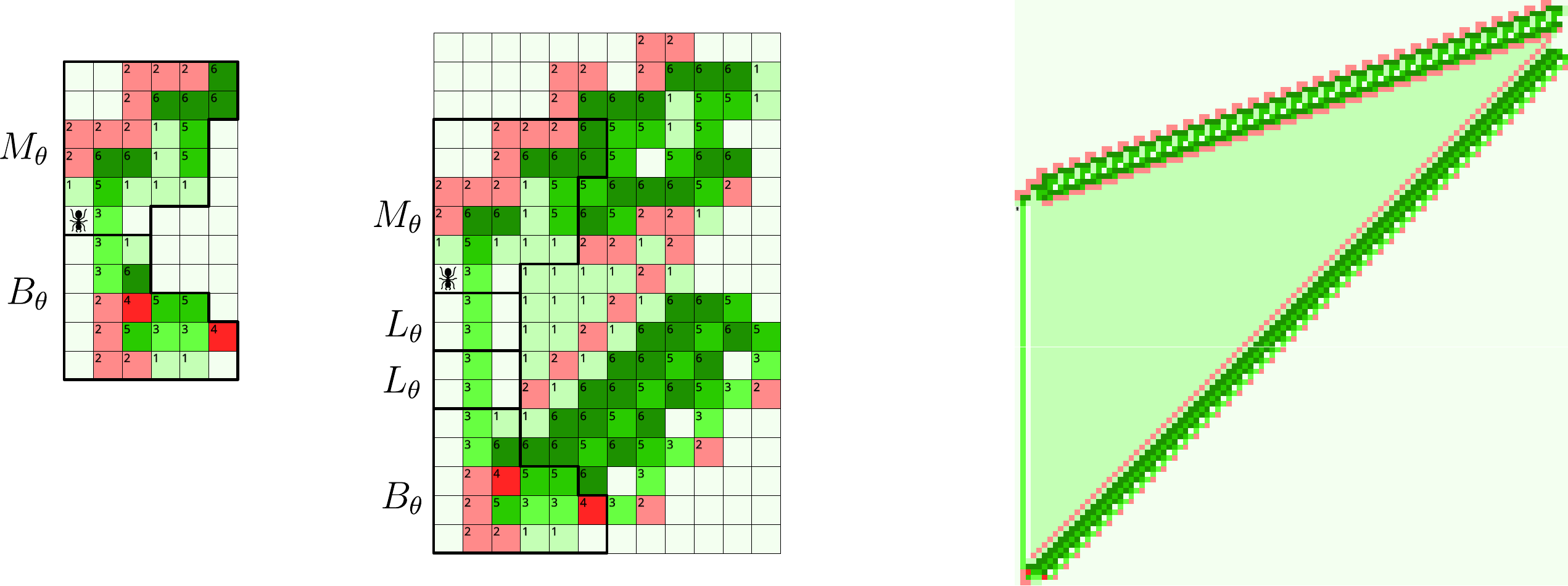}
  \caption{The pattern $P_\theta$ that starts a cone (left), and the configurations reached from $P_\theta$ after $780$ (centre) and $35520$ steps (right, scaled down).}
    \label{fig:cone}
\end{figure}

\end{proof}

\section{Further work : recurrent cells}

In \cite{gajardo2025} and in the present paper we argue that no generalisation of the highway conjecture holds for generalised ants as there exist ants which have infinitely many highways and ants where both highway and non-highway behaviours can be reached from initially finite configurations.

However, our research and computer explorations point us towards a new conjecture concerning all initial configurations (and not only finite configurations) : 

\begin{conjecture}
  For any non-trivial ant $w$ and any initial configuration $C$,
  the set $\mathrm{Rec}_{w,C}\subseteq\mathbb{Z}^2$ of positions that are visited infinitely many times is either $\emptyset$ or $\mathbb{Z}^2$.
\end{conjecture}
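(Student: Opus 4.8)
The plan is to attack this conjecture via a zero–one type dichotomy argument on the recurrence structure of the ant's trajectory. First I would fix a non-trivial ant $w$ and an arbitrary initial configuration $C$, and consider the set $\mathrm{Rec} = \mathrm{Rec}_{w,C}$ of cells visited infinitely often, together with its complement $\mathrm{Fin}$ of cells visited only finitely many times. The key structural observation is that if the trajectory is unbounded (which, by the Bunimovich–Troubetzkoy-type result \cite{bunimovich1992} cited in the excerpt, is always the case for non-trivial ants on unbounded trajectories — one first needs to handle the bounded-trajectory case, but a bounded trajectory visits only finitely many cells, so it must revisit at least one infinitely often, and from there a finite-automaton/pigeonhole argument on the cell states forces the ant into a cycle, making $\mathrm{Rec}$ a full cycle-orbit and $\mathrm{Fin}$ everything outside; this case would need $\mathrm{Rec} = \mathbb{Z}^2$ to be impossible, so bounded trajectories would have to be ruled out entirely or the conjecture restated — I would flag this). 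Assuming the trajectory is unbounded, I would try to show that $\mathrm{Rec}$, if nonempty, cannot have a ``boundary'' separating it from $\mathrm{Fin}$: the crux is a connectivity/no-leak argument showing that if the ant returns infinitely often to some cell $p$, then it must also return infinitely often to every neighbour of $p$, whence $\mathrm{Rec}$ is both open and closed in the graph $\mathbb{Z}^2$ and therefore all of $\mathbb{Z}^2$.

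The key steps, in order, would be: (i) reduce to unbounded trajectories, disposing of the bounded case as above; (ii) establish that after finitely many steps the ant never again visits any cell of $\mathrm{Fin}$, so from some time $N_0$ on the ant's trajectory lives entirely in $\mathrm{Rec}$; (iii) show $\mathrm{Rec}$ is connected, using the fact that the trajectory itself is a connected path in $\mathbb{Z}^2$ and it lies eventually in $\mathrm{Rec}$ — more precisely, the set of cells visited after time $N_0$ is connected and coincides with $\mathrm{Rec}$ (each is visited infinitely often since there are finitely many cells if $\mathrm{Rec}$ is finite, contradiction with unboundedness, or: the eventual trajectory, being an infinite walk on a finite set, revisits each of its cells infinitely often); (iv) the no-leak step: suppose $\mathrm{Rec} \neq \mathbb{Z}^2$; then there is a cell $q \in \mathrm{Fin}$ adjacent to a cell $p \in \mathrm{Rec}$. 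Since $p$ is visited infinitely often but $q$ only finitely often, after time $N_0$ every visit to $p$ must be immediately preceded and followed by a move that avoids $q$; I would analyse the local transition rule at $p$ (the state of $p$ cycles through $\{0,\dots,|w|-1\}$ each time it is visited, producing all four directions of departure over any $|w|$ consecutive visits, since the rule word is non-trivial, i.e. contains both $L$ and $R$) to derive that the ant must eventually leave $p$ in the direction of $q$, contradicting $q \in \mathrm{Fin}$.

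The main obstacle I expect is step (iv), the no-leak argument — specifically, showing that the ant's departures from a recurrent cell $p$ are ``rich enough'' to force a visit to every neighbour. The subtlety is that the direction of departure from $p$ depends not only on $p$'s current state (which does cycle through all values) but on the ant's incoming direction, and one must rule out a conspiracy where the incoming direction always compensates so as to steer the ant away from the forbidden neighbour $q$ forever. Handling this likely requires a finer invariant: for instance, tracking the pair (state of $p$, incoming direction) over successive visits and showing this pair-sequence cannot avoid, for all time, the finitely many combinations that would send the ant into $q$ — essentially a combinatorial argument that the ``return map'' at $p$ is eventually surjective onto departure directions, or alternatively a counting argument comparing the number of visits to $p$ with the number of visits to each neighbour. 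A secondary difficulty is the bounded-trajectory case in step (i): there the conclusion $\mathrm{Rec} \in \{\emptyset, \mathbb{Z}^2\}$ can only hold if one shows non-trivial ants never have bounded trajectories from any configuration, which is itself a substantial claim (known for some rule words but perhaps not all), so the honest version of this proof would either invoke or prove such a boundedness result, or note that the conjecture as stated implicitly absorbs it.
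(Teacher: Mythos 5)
The statement you are trying to prove is not a theorem of the paper: it appears in the final section (``Further work'') precisely as an open \emph{conjecture}, with no proof offered. So there is no ``paper's own proof'' to compare against, and what you have written is correctly labelled a plan rather than a proof --- but the plan contains gaps that are not merely technical.

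The most serious one is step (ii). From ``every cell of $\mathrm{Fin}$ is visited finitely often'' you cannot conclude that there is a time $N_0$ after which the ant never visits $\mathrm{Fin}$ again: $\mathrm{Fin}$ may be infinite, and the ant may forever keep visiting \emph{fresh} cells of $\mathrm{Fin}$ (each only finitely often) between its returns to $\mathrm{Rec}$. Indeed, every highway behaviour exhibited in this paper is exactly of that type --- $\mathrm{Rec}=\emptyset$ there, but the same phenomenon (excursions through ever-new cells) shows the quantifier exchange is illegitimate in general. Steps (iii) and (iv) inherit this flaw, since they assume the post-$N_0$ trajectory is an infinite walk confined to $\mathrm{Rec}$. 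Step (iv) itself, which you honestly flag as the main obstacle, is the actual content of the conjecture: the state of a recurrent cell $p$ does cycle through $\{0,\dots,|w|-1\}$, but the departure direction is $\turnleft(q)$ or $\turnright(q)$ for the \emph{incoming} direction $q$, and nothing currently known rules out the ``conspiracy'' in which incoming directions always compensate so that a fixed neighbour is never entered. No counting or surjectivity argument for this return map is available in the literature, and producing one would essentially resolve the conjecture. Finally, in step (i), note that the unboundedness result of \cite{bunimovich1992} applies to non-trivial rule words (containing both $L$ and $R$), which is exactly the hypothesis ``non-trivial ant'' in the statement, so that reduction is the one part that is on solid ground --- though even then a bounded trajectory is already excluded, so the case split does little work.
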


\printbibliography
\end{document}